\newtheorem{thm}{Theorem}
\newtheorem{lem} [thm]{Lemma}
\newtheorem{cor}[thm]{Corollary}
\newcommand{\cT}{{\mathcal T}}
\newcommand{\R}{\mathbb R}
\newcommand{\T}{\mathcal T}
\newtheorem{prop}[thm]{Proposition} 
\newtheorem*{prop1}{Proposition~\ref{prophelps}}
\newtheorem*{thm2}{Theorem~\ref{thm}}
\title[Tree-like Reticulation Networks]{Tree-like Reticulation Networks -\\ When Do Tree-like Distances Also Support Reticulate Evolution?}
\author{Andrew R. Francis}
\address{Centre for Research in Mathematics,\\ 
University of Western Sydney, Australia. }
\email{a.francis@uws.edu.au}
\author{Mike Steel}
\address{Biomathematics Research Centre,\\ 
University of Canterbury,  New Zealand.}
\email{mike.steel@canterbury.ac.nz}
\begin{document}

\begin{abstract}
Hybrid evolution and horizontal gene transfer (HGT) are processes where evolutionary relationships may more accurately be described by a reticulated network than by a tree. In such a network, there will often be several paths between any two extant species, reflecting the possible pathways that genetic material may have been passed down from a common ancestor to these species. These paths will typically have different lengths but an `average distance' can still be  calculated between any two taxa. In this  article, we ask whether this average distance is able to distinguish reticulate evolution from pure tree-like evolution. We consider two types of reticulation networks:  hybridization networks and HGT networks. For the former, we establish a general result which shows that average distances  between extant taxa can appear tree-like, but only under a single hybridization event near the root; in all other cases, the two forms of evolution can be distinguished by average distances.  For HGT networks, we demonstrate some analogous but more intricate results.  

\textbf{Keywords:} Phylogeny; Reticulation Network; Hybridization; Horizontal Gene Transfer; Distance Measures.
\end{abstract}

\maketitle

\section{Introduction}
Evolutionary relationships between present-day taxa (species, genera etc) are usually represented by a phylogenetic tree, which shows a branching pattern of speciation from some ancestral taxon to the taxa we observe today \citep{fel}. However, reticulate evolution is known to complicate this simple `tree model'  due to processes such as the formation of hybrid species \citep{mcb}, and other mechanisms where genetic material is exchanged between species (such as horizontal gene transfer (HGT)) or within a species (recombination, a process we do not consider further in this paper).  Consequently, phylogenetic networks  that allow `vertical' branching through time as well as `horizontal' reticulation events have increasingly been recognised as providing a more complete picture of much of the evolutionary 
history of life \citep{hus, hus2, nak}. 

This transition has brought with it a number of mathematical and computational problems -- in particular, how to reconstruct and analyse such networks, and how to distinguish different types of reticulation from tree-like evolution \citep{hold, holl}. In this note we consider one aspect of the latter topic, namely the question of whether or not, if we knew the average evolutionary distance between each pair of species, we could determine whether the species network could have been a tree, or whether some  more complicated reticulate history is required.  

In a phylogenetic tree, the evolutionary distance between two present-day species is simply the path length from each species to the other via its most recent common ancestor (here, `evolutionary distance' typically refers to the actual or expected amount of genetic change).  However, for networks, there may be many paths linking two present-day species, and the evolutionary distance will be some average of these path lengths. Nevertheless, it is conceivable that in some cases, these distances might still appear to fit a tree exactly.  We explore this question for two classes of networks: those relevant to 
hybrid evolution; and those relevant to HGT. Both are special cases of a more general description of (binary) `reticulation' networks, which we now define.

\subsection{Definitions: Reticulation Networks}
Following~\citet{linz}, a {\em reticulation network} $N$ on a finite set $X$  is a rooted acyclic digraph $(V,A)$ with the following properties:
\begin{itemize} 
\item[(i)]  the {\em root} vertex has in-degree 0 and out-degree 2;
\item[(ii)]  $X$ is the set of vertices with out-degree 0 and in-degree 1 (`leaves');
\item[(iii)] all remaining vertices are {\em interior vertices}, and each such vertex either has in-degree 1 and out-degree 2  (a {\em tree vertex})
 or in-degree 2 and out-degree 1 (a {\em reticulation vertex});
\item[(iv)]   the arc set $A$ of $N$ is the disjoint union of two subsets, the set of `reticulation arcs' $A_R$ and the set of `tree arcs' $A_T$; moreover each reticulation arc ends at a reticulation vertex, and each reticulation vertex has at least one incoming reticulation arc;
\item[(v)]  every interior vertex has at least one outgoing tree arc; and
\item[(vi)] there is a function $t: V \rightarrow \R$ so that (a) if $(u,v)$ is a tree arc then $t(u)<t(v)$, and (b) if $(u,v)$ is a reticulation arc, then $t(u)=t(v)$. 
\end{itemize}

Condition (vi) embodies the biological requirement that the network has a temporal representation that reflects the order of speciation events, and for which reticulation events involve  two species that co-exist at some point in time. 

In applications, $X$ typically denotes a set of extant (present day) species.   Two types of reticulation networks are particularly relevant in evolutionary biology (for different reasons, as we explain shortly) and these will be the main classes we will consider in this paper.  The distinction is in the pair of  arcs ending at a reticulation vertex in property (iii).  Namely,
\begin{itemize}
\item
in a  {\em hybridization network}, \underline{both} arcs ending in a reticulation vertex are reticulation arcs, and
\item in a {\em horizontal gene transfer (HGT) network}, exactly \underline{one} of the arcs ending in a reticulation vertex is a reticulation arc.
\end{itemize}
A simple example of each type is shown in Figure~\ref{fig1}.

\begin{figure}[htb]
\centering
\includegraphics[scale=1.0]{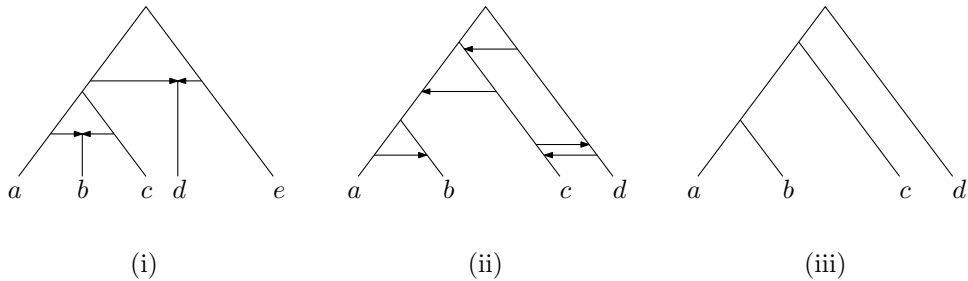}
\caption{(i) A hybridization network on $\{a,b,c,d,e\}$ (usually extant species); (ii) an HGT network on $\{a,b,c,d\}$; and (iii)  the tree $T_N$ obtained from the HGT network $N$ in (ii) by deleting all reticulation arcs.  Reticulate arcs in (i) and (ii) are drawn as arrows; in each case the reticulate vertices are at the endpoints of the reticulate arcs. Note that (i) has four reticulation arcs and two reticulation vertices, while (ii) has five reticulation arcs and five reticulation vertices.  }
\label{fig1}
\end{figure}

Hybridization networks model settings where a new species arises from members of two lineages, a process that occurs in plants, fish, and some animals~\citep{bul,mcb},  while HGT models the situation where a gene (or genes) are transferred from one species to another (a process that is common in bacteria) \citep{dag}.

\section{Reticulation Networks and Average Distances}

\subsection{Basic Properties of Reticulation Networks}
Firstly, observe that a reticulation network $N$ on $X$ has no reticulation vertices if and only if $N$ is a rooted binary phylogenetic $X$-tree (as defined, for example, in \citet{sem}).

Moreover, any  hybridization network  is necessarily a {\em tree-child network};  that is, from any interior vertex in $N$, there is a path to a leaf that avoids any reticulation vertex. Tree-child networks have a  number of desirable combinatorial and computational properties (see e.g. \citet{car, van}).

Hybridization networks have bounded size once $n=|X|$ is specified, since such a network can have at most $n-2$ reticulation vertices \citep{mcd}.   To see this, note that in any digraph, the sum of the out-degrees equals the sum of the in-degrees so we obtain:
\begin{equation}
\label{counter}
2+2t+r =\sum_{v \in V} {\rm deg_{out}}(v) =  \sum_{v \in V} {\rm deg_{in}}(v) = n+t+2r,
\end{equation}
 where  $t$ and $r$ refer to the number of tree vertices and hybridization vertices, respectively. 
Note that each hybridization vertex corresponds to two parent tree vertices, and hence $t\ge 2r$ in a hybridization network.   
Eqn. (\ref{counter}) gives $n = t+2-r$, and using $t \geq 2r$ we obtain:
\begin{equation}
\label{counter2}
r \leq n-2.
\end{equation}
A consequence of this bound is that, up to isomorphism, there are only finitely many hybridization networks for any given $n$ (the enumeration of hybridization networks has recently been investigated by \citet{mcd}).

By contrast, an HGT network with a given leaf set $X$ can have arbitrarily many reticulation vertices, and so there are infinitely many  HGT networks for a given $X$.
However, an HGT network $N$ has a useful property that is absent in a hybridization network:  an HGT network always has an
 associated canonical rooted binary phylogenetic $X$-tree $T$ that is obtained from $N$ by deleting all the reticulation arcs (and suppressing any resulting vertices that have both 
 in-degree 1 and out-degree 1). 
 We denote this tree with the notation $T_N$ (an example is shown in Figure~\ref{fig1}).

Given any reticulation network $N$ on $X$, suppose that for each reticulation vertex, we delete exactly one of the in-coming arcs. The resulting graph is a rooted tree with leaf set $X$ and a root that coincides with the root of $N$. Moreover, if we suppress any resulting vertices that have both in-degree 1 and out-degree 1 we obtain a rooted binary phylogenetic $X$-tree, $T$. We say that $T$ is {\em displayed} by $N$ and we let $\T(N)$ denote the set of all the (at most) $2^r$ such trees that are displayed by $N$.

\subsection{Tree Metrics}\label{sec:tree.metrics}
Consider any unrooted phylogenetic $X$-tree $T=(V,E)$ together with a weight function $w:E\to\R^{>0}$ that assigns strictly positive weights to each edge of the tree.
Then $(T, w)$ induces a distance function on $X$
as follows:   For each pair of leaves $x,y$ on a tree $T$, the {\em tree distance} between them is defined as the sum of the weights of the edges that lie on the (unique) path in $T$ connecting $x$ and $y$. That is: 
\[d_{(T,w)}(x,y):=\sum_{e\in P(T;x,y)} w(e),\]  
where if $x=y$ we set $d_{(T,w)}(x,y) =0$ (the empty path has length zero). The resulting function $d_{(T,w)}:X\times X\to \R^{\ge 0}$ is a metric on $X$.

A metric on $X$ that can be represented in this way on some phylogenetic $X$-tree is said to be a {\em tree metric}. This holds  if and only if the metric satisfies the  `four-point condition'. This states that for any four (not necessarily distinct) points $u,v,w,y$ from $X$, two of the three
sums $d(u,v)+d(w,y); d(u,w)+d(v,y); d(u,y) + d(v,w)$ are equal, and are greater than or equal to the other one. 
This classic characterisation of tree metrics dates back to the 1960s (for more recent treatments, see  \citet{dre, sem}).
Moreover, if $d$ is a tree metric on $X$,  then $d$ can be written $d=d_{(T, w)}$ for precisely one choice of the pair $(T, w)$, where $T$ is a phylogenetic $X$-tree, and $w$ a strictly positive edge weight function. In the  case where $T$ is binary,  we will say that $d$ is a {\em binary tree metric}.

\subsection{Average Distances on Networks}\label{avsec}
A reticulation network can be thought of as a `weighted union' of the trees displayed by $N$.  We formalise this idea, and extend it to bring in distances,  as follows: 

For each vertex $v$ in the set $V_R$ of reticulation vertices of $N$, let $R(v)$ denote the two arcs that end at $v$.
Suppose we  are given a reticulation network $N=(V,A)$ on $X$ along with:
\begin{itemize}
\item a weight function $w: A_T \rightarrow \R^{>0}$ that assigns weights to each tree arc; 
and
\item a strictly positive probability distribution $\beta$ on the set $F_N$ of functions $f: V_R \rightarrow A$ for which $f(v) \in R(v)$.  
\end{itemize}

In evolutionary biology, the weighting $w$ typically describes some measure of genetic change along each tree arc,  and each function $f \in F_N$ indicates the line of descent of a particular gene, and so describes a tree $T_f$ in $\T(N)$. Notice that $|F_N|=2^r$ though it may be possible for different functions $f$ to lead to the same rooted phylogenetic tree (possibly with different tree metrics).  For a given $f \in F_N$, its $\beta$-value, denoted $\beta_f$,  can be thought of as the expected proportion of genes that follow the tree $T_f$.
Since $\sum_f \beta_f = 1$, we call $\beta$ the `mixing distribution' of the network.

For example,  suppose we have two arcs $a$ and $a'$ that end at the reticulate vertex $v$, and a function 
 $\alpha: \{a, a'\} \rightarrow \R^{>0}$ satisfying $\alpha(a)+ \alpha(a')=1$  (such a function $\alpha$ 
could indicate the proportion of genetic material 
that is contributed from each of the two parent lineages when a reticulation occurs).  
When there is just a single reticulation, the mixing distribution $\beta$ can be identified with the $\alpha$ function for the single reticulation vertex.   However, when more than one reticulation  vertex is present, one needs to consider how the different reticulation events might interact.  
In the simplest case, we might treat  the reticulations as (stochastically) independent events  (with $\alpha$ now being regarded as assigning probabilities rather than proportions)
so that the resulting randomly generated function $f$ would have probability $\beta_f = \prod_{v \in V_R} \alpha(f(v)).$   This assumption of independence is very strong and is more than we require here. Indeed, all that we require is that the mixing distribution $\beta$  of the network satisfies $\beta_f>0$ for all $f \in F_N$.

We now define the distance induced by a reticulation network with weighted tree arcs and a mixing distribution.  
For such a triple $(N, w, \beta)$ we define
$$d= d_{(N, w, \beta)}: X \times X \rightarrow \R^{\geq 0}$$
by 
$$d(x,y) = \sum_{f \in F_N} \beta_f d_{(T_f, w_f)} (x,y),$$
where $w_f$ is the edge weight induced by $N$ on $T_f$.
If  there are no reticulation vertices in $N$, so that $N$ is a rooted phylogenetic $X$-tree $T$, we take $d$ to be the tree metric $d_{(T,w)}$. 
We illustrate these ideas for a hybridization network and an HGT network in Figures \ref{figure2} and~\ref{figure3}, respectively.

Notice that $d_{(N, w, \beta)}$ is always a metric on $X$, since it is a convex combination of tree metrics on $X$.

\begin{figure}[ht]
\centering
\includegraphics[scale=1.0]{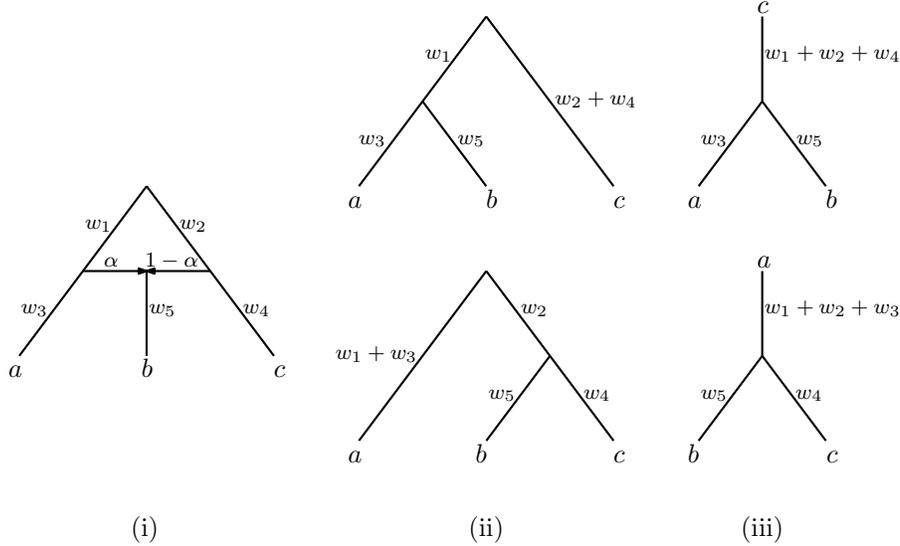}
\caption{(i) A simple hybridization network $N$ with edge weights $w$, and reticulation values $\alpha,1-\alpha$. (ii)  The two rooted trees in $\T(N)$ that are displayed by $N$, together with their associated
edge weights. (iii) The unrooted trees from (ii), which have the same topology, even though the trees in (ii) do not.   For this example, the network distance between $a$ and $b$ is given by $d_N(a,b) = \alpha(w_3+w_5) + (1-\alpha)(w_1+w_2+w_3+w_5).$}
\label{figure2}
\end{figure}

\begin{figure}[ht]
\centering
\includegraphics[scale=1.0]{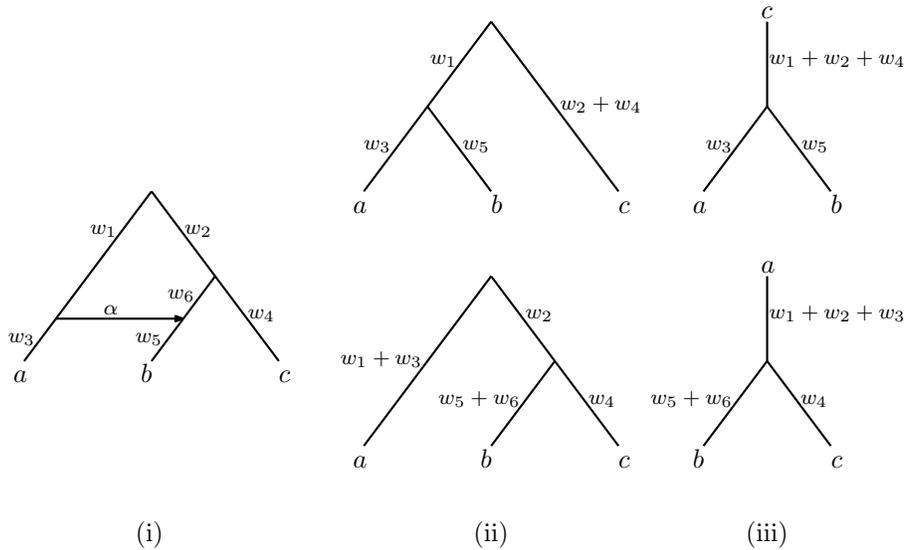}
\caption{(i) A simple HGT network with edge weights $w$, and reticulation value $\alpha$ on the unique reticulation arc, and $1-\alpha$ on the incident tree arc that has weight $w_6$.
(ii) The two rooted trees in $\T(N)$ that are displayed by $N$, together with their associated
edge weights. (iii) The unrooted trees from (ii).  For this example, the network distance between $a$ and $b$ is given by $d_N(a,b) = \alpha(w_3+w_5) + (1-\alpha)(w_1+w_2+w_3+w_5+w_6).$}
\label{figure3}
\end{figure}

Some detail is known about the conditions that govern when a metric can be represented on a reticulation network.    In \cite{willson2012}, Willson shows how a network can be reconstructed from average distances, given that one knows the underlying network graph already.  There are some uniqueness properties, including for the reticulation probabilities ($\alpha$).  

In~\cite{willson2013}, Willson shows how one can generate the underlying network graph from the average distances, under some hypotheses.  He shows that if there is only one reticulation, it can be done, and he provides an algorithm for this.  There are some necessary conditions on the average distance function, and which are sufficient if there is only one reticulation.

In this paper, we study conditions under which a tree metric can be represented on a network.  We impose no conditions on the distance function.

Network distances arise in a range of models in molecular genetics, for example those in which  DNA sequences  undergo site mutations along the tree arcs, and for which
(i) at speciation events (tree vertices) the sequences on the two outgoing arcs are identical to the sequence at the end of the incoming arc,
and (ii) at reticulation vertices, the state at each site is selected from the state at the same position at the end of either one of the two incoming arcs.
In such a hybridisation network $N$ on $X$, the history of the $i^{\rm th}$ position for each species in $X$ traces back according to one of the trees in $\cT(N)$.  Now, consider the Hamming distance $d_H$ between pairs of species from the set $X$ (so $d_H(x,y)$ is the proportion of sites
 where taxon $x$ and taxon $y$ differ). For binary sequences, suppose each site mutates at most once in the network (the so-called `infinite sites model'  \cite{dur}) and at each reticulation vertex the state at each site is (independently) selected to match the state at the site from one of its two incoming arcs with the prescribed $\alpha$ probability values. Then  the expected Hamming distance $\overline{d_H}$ on $X$ satisfies $\overline{d_H}= d_{(N, w, \beta)}$, where the weight $w$ of a tree arc $(u,v)$ is the proportion of sites for which a mutation occurs along $(u,v)$.

\subsection{Tree Metrics from a Network}
We will show (Proposition~\ref{prophelps})  that if each tree in $\T(N)$ is isomorphic to the same (unrooted) phylogenetic $X$-tree, then  the network induces a distance that is tree-like and behaves nicely with respect to the weights.  On the other hand, if exactly two different unrooted trees are  present in $\T(N)$,  then the distance function induced by the network is never tree-like. 
The proof of this result can be found in the Materials and Methods section. 

\begin{prop}
\label{prophelps}
\begin{itemize}
\item[(a)]
Suppose that  all the trees in $\T(N)$ are isomorphic as unrooted phylogenetic $X$-trees to some tree $T$. Then $d_N$ is a tree metric that is represented by $T$.
\item[(b)]If the trees in $\T(N)$ can be partitioned into two non-empty isomorphism classes of unrooted trees, then $d_N$ is not a tree metric. 
\end{itemize}
\end{prop}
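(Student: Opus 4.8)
The plan is to handle the two parts separately, in both cases exploiting the fact that $d_{(T,w)}(x,y)$ is \emph{linear} in the edge weights $w$. For part~(a), since every $T_f$ is isomorphic to $T$ as an unrooted phylogenetic $X$-tree, I would fix these isomorphisms to identify the edge set of each $T_f$ with the edge set $E(T)$ of $T$. Under this identification the path $P(T_f;x,y)$ corresponds to $P(T;x,y)$ for every pair $x,y$, because the topology is the same. Defining $\bar{w}(e):=\sum_{f}\beta_f\, w_f(e)$ for each $e\in E(T)$ — which is strictly positive since each $w_f(e)>0$ and each $\beta_f>0$ — I would then interchange the two finite sums:
\[ d_N(x,y)=\sum_{f}\beta_f\sum_{e\in P(T;x,y)}w_f(e)=\sum_{e\in P(T;x,y)}\bar{w}(e)=d_{(T,\bar{w})}(x,y), \]
exhibiting $d_N$ as precisely the tree metric carried by $T$ with the averaged positive weights $\bar{w}$.

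For part~(b) I would invoke the four-point condition from Section~\ref{sec:tree.metrics}. Let the two isomorphism classes be represented by distinct unrooted binary $X$-trees $T_1$ and $T_2$. Since distinct binary unrooted trees on a common leaf set induce different quartet topologies on at least one four-element subset (and the existence of two distinct topologies already forces $|X|\ge 4$), I would choose $\{a,b,c,d\}\subseteq X$ on which $T_1$ and $T_2$ display different quartets; say $T_1$ gives $ab|cd$ and $T_2$ gives $ac|bd$, so that $ad|bc$ is realized by neither class. Abbreviating $d_f:=d_{(T_f,w_f)}$, for a single tree $T_f$ with quartet $ab|cd$ the four-point condition gives $d_f(a,b)+d_f(c,d)<d_f(a,c)+d_f(b,d)=d_f(a,d)+d_f(b,c)$, the last two strictly exceeding the first by twice the positive weight of the central quartet edge; the analogous strict inequalities hold for trees with quartet $ac|bd$. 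The crucial observation is that the sum $d_f(a,d)+d_f(b,c)$ attached to the \emph{missing} quartet $ad|bc$ equals the (tied) maximum of the three sums for every displayed tree $T_f$, independent of its class.

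I would then form the three quartet sums for $d_N$ by averaging against $\beta$. Because $d_f(a,d)+d_f(b,c)$ is a maximum in each individual tree, and strictly exceeds $d_f(a,b)+d_f(c,d)$ on the nonempty class of $ab|cd$-trees while strictly exceeding $d_f(a,c)+d_f(b,d)$ on the nonempty class of $ac|bd$-trees, the averaged sum $d_N(a,d)+d_N(b,c)$ is strictly larger than both $d_N(a,b)+d_N(c,d)$ and $d_N(a,c)+d_N(b,d)$. Hence the largest of the three sums is strictly unique, so no two of them can be simultaneously equal and weakly largest: the four-point condition fails on $\{a,b,c,d\}$, and $d_N$ is therefore not a tree metric.

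The main obstacle is the part~(b) step of pinning down the right quartet and tracking how the three pairwise sums behave under the convex combination $\beta$. The idea that makes it work is that the quartet topology realized by \emph{neither} isomorphism class contributes the maximal pairwise sum in every displayed tree, so averaging a strictly positive deficit over each of the two nonempty classes breaks exactly the tie among the two largest sums that the four-point condition demands.
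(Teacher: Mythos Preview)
Your proposal is correct and follows essentially the same route as the paper's proof: part~(a) is the identical linearity/interchange-of-sums argument, and part~(b) is the same quartet argument, choosing a four-leaf set on which the two classes display different resolved quartets and observing that the pairwise sum attached to the third (unrealized) quartet topology is tied-maximal in every displayed tree, so that after averaging with strictly positive $\beta_f$ it becomes the strict unique maximum and the four-point condition fails. The only cosmetic difference is that the paper first aggregates each class into partial sums $d_1,d_2$ and invokes part~(a) on the restricted quartet before comparing, whereas you compare term-by-term; the content is the same.
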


\section{Hybridization Networks}
In this section, we are  interested in whether or not a tree metric can be realised on a hybridization network and, conversely, whether a hybridization network might induce a distance that fits perfectly on some tree.  In order to state our main result, Theorem~\ref{thm}, we introduce a further definition:
we call a hybridization network with $k$ hybridizations a \emph{$k$-hybridization network}, and we call a 1-hybridization network for which the two reticulation arcs have 
their source vertices adjacent to the root a \emph{primitive} 1-hybridization network.
The proof of this Theorem is given in the Materials and Methods section.

\begin{thm}\label{thm}
Let $X$ be a finite set of taxa, and suppose $d$ is a metric on $X$, $d:X\times X\to\R^{\ge 0}$.  
\begin{enumerate}
\item[(a)] If $d$ is a binary tree metric, then there exists a primitive 1-hybridization network $N$ and weights $w,\beta$ such that $d=d_{(N, w, \beta)}$.
\item[(b)] If $N$ is a hybridization network, and  $d=d_{(N, w, \beta)}$ is a tree metric for some $w,\beta>0$, then $N$ is either a tree, or $N$ is a primitive 1-hybridization network. 
\end{enumerate}
\end{thm}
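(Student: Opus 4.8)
The plan is to reduce both parts to the purely combinatorial question of which unrooted topologies occur among the displayed trees in $\cT(N)$, and then feed this into Proposition~\ref{prophelps}. The governing principle, from Proposition~\ref{prophelps}(b), is that $d_{(N,w,\beta)}$ can be a tree metric only if \emph{all} trees in $\cT(N)$ are isomorphic as unrooted phylogenetic $X$-trees; conversely, by part~(a) of that proposition, if they are all isomorphic to a common tree $T$ then $d_{(N,w,\beta)}$ is automatically the tree metric carried by $T$, with edge weights given by the $\beta$-averages of the weights induced on the displayed trees.

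For Theorem~\ref{thm}(a) I would start from the unique pair $(T,w^{\ast})$ representing the given binary tree metric $d$, pick any internal vertex $m$ of the unrooted binary tree $T$, and let $S_P,S_Q,S_C$ be the three subtrees hanging off the three edges at $m$. I then build the primitive $1$-hybridization network $N$ whose root $\rho$ has children $u,v$, whose reticulation vertex $h$ receives the reticulation arcs $u\to h$ and $v\to h$, and which carries $S_P,S_Q,S_C$ as the pendant subtrees below $u,v,h$ respectively; a temporal labelling with $t(u)=t(v)=t(h)$ makes this a legitimate hybridization network. A direct computation shows that the two trees obtained by resolving $h$ are, as unrooted trees, both the star of $S_P,S_Q,S_C$ about a central vertex, hence isomorphic to $T$, so Proposition~\ref{prophelps}(a) applies. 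It then remains to solve for weights: the pendant edge to $S_C$ and all edges interior to $S_P,S_Q,S_C$ are matched to $w^{\ast}$ directly, while the two central edges satisfy $w^{\ast}_P=w_P+(1-\alpha)s$ and $w^{\ast}_Q=w_Q+\alpha s$, where $s=w(\rho\to u)+w(\rho\to v)$ and $\alpha=\beta_{T_1}$. Taking $\alpha=\tfrac12$ and $s$ small enough that $(1-\alpha)s<w^{\ast}_P$ and $\alpha s<w^{\ast}_Q$ keeps every tree-arc weight strictly positive, giving $d=d_{(N,w,\beta)}$.

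For Theorem~\ref{thm}(b), suppose $d=d_{(N,w,\beta)}$ is a tree metric with $\beta>0$, so by the principle above all trees in $\cT(N)$ share one unrooted topology $T^{\ast}$. The first step is a single-reticulation lemma: if $N$ has exactly one reticulation vertex $h$, with parents $u,v$ and pendant cluster $C$, then the two displayed trees arise from the common base tree $B$ (obtained by deleting $h$ and the subtree $S_C$ below it and suppressing degree-$2$ vertices, in particular $u$ and $v$) by re-attaching $S_C$ at the edge $e_u$ left by suppressing $u$, respectively the edge $e_v$ left by suppressing $v$. As these are phylogenetic trees on the labelled leaf set $X$, they are isomorphic iff $e_u=e_v$. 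A short case analysis on the location of $u$ and $v$ (using that $t(u)=t(v)$ forbids either being an ancestor of the other, so their only possible coincidence is a shared parent) shows $e_u=e_v$ exactly when $u$ and $v$ are the two children of $\rho$, i.e. exactly when the hybridization is primitive. This settles the one-reticulation case.

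The main obstacle is ruling out $r\ge 2$, which I would do by \emph{conditioning}. For a reticulation $h$, fix any resolution $g$ of the \emph{other} reticulations; the resulting single-reticulation network $N_g$ has its two displayed trees inside $\cT(N)$, hence isomorphic, so the lemma forces $h$ to be primitive in $N_g$, i.e. the two children of $\rho$ in $N_g$ are $u$ and $v$. Labelling the parents of each $h_i$ so that $u_{h_i}$ lies on the $a$-side and $v_{h_i}$ on the $b$-side (where $a,b$ are the children of $\rho$ in $N$), I would run this for two reticulations $h_1,h_2$ with $g$ chosen to \emph{keep} the $a$-side arc into one of them (so that parent survives, being of out-degree $2$) while leaving the other free; asking which vertex becomes the $a$-side child of $\rho$ then forces $u_{h_1}$ and $u_{h_2}$ into an ancestor relation, and doing this in both directions forces $u_{h_1}=u_{h_2}$. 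But then one tree vertex emits reticulation arcs to both $h_1$ and $h_2$ while still needing an outgoing tree arc by condition~(v), giving out-degree at least $3$ and contradicting that tree vertices have out-degree $2$. Hence $r\le 1$, and with the lemma this yields the stated dichotomy. The delicate points I expect to have to pin down are that resolving the other reticulations cannot create long chains of suppressed vertices near the root (so that the $a$-side child of $\rho$ in $N_g$ is governed by the single deleted arc), and that the $a$-/$b$-side ancestry used in the comparison is well defined and consistent across the different conditioned networks.
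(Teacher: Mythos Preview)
Your construction for Part~(a) is essentially the paper's: pick an interior vertex of $T$, hang the three incident subtrees off a primitive $1$-hybridization scaffold with $\alpha=\tfrac12$, and solve the two linear equations for the central edge weights. Your single-reticulation analysis for Part~(b) likewise matches the paper's Lemma~\ref{lem3}.

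The gap is in your ``governing principle''. Proposition~\ref{prophelps}(b) asserts only that if $\cT(N)$ falls into \emph{exactly two} unrooted isomorphism classes then $d_N$ fails to be a tree metric; its contrapositive leaves open the possibility of three or more classes. And there is no general metric obstacle in that case: an equal mixture of three quartet metrics, one on each resolution of $\{x,y,w,z\}$, satisfies the four-point condition (it is the star metric). So your deduction ``$d_N$ a tree metric $\Rightarrow$ all displayed trees share one unrooted topology'' is not a consequence of Proposition~\ref{prophelps}(b) as stated. Your entire conditioning argument for $r\ge 2$ rests on this step: you take two trees in $\cT(N)$ that differ only at $h$, and conclude they must be isomorphic. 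But if they disagree on some quartet $q$, the remaining trees in $\cT(N)$ might supply the third resolution of $q$, and then Lemma~\ref{lem1}(b) does not bite.

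The principle you want is in fact true for hybridization networks, but proving it requires exactly the quartet work you are trying to bypass. Restricting $N$ to any four leaves gives a hybridization network with $r\le 2$ (by~\eqref{counter2}), and one must then check by hand (the paper's Lemma~\ref{lems}) that with two reticulations on four leaves only two unrooted quartet types ever arise; combined with your single-reticulation case, this shows every quartet sees at most two topologies, whence Lemma~\ref{lem1}(b) yields your principle. But at that point you have reconstructed the paper's argument. The ``delicate points'' you flag about suppressions near the root are a secondary concern; the missing ingredient is this quartet-level bound on the number of topologies.
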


One way to rephrase the key point of this theorem is that a tree metric can be represented on a $k$-hybridization network if and only if either $k=0$ or $k=1$ and the hybridization is placed near the root.  But there is slightly more here, in that any network admitting a tree metric must be primitive 1-hybridization network.  We are also able to count such networks, showing  that there are $4(n-3)$ such primitive 1-hybridization networks on $n$ taxa for each tree metric (Proposition~\ref{prop:counting}).

\subsection{The Number of Hybridization Networks for each Tree Metric}
There is a unique unrooted tree for each tree metric.  Theorem~\ref{thm} means that for each rooted tree we have at least one primitive 1-hybridization network.  But how many do we in fact have?

\begin{prop}\label{prop:counting}
For each tree metric on $n$ leaves, there are $4(n-3)$ 1-hybridization networks that realise the metric. 
\end{prop}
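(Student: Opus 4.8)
The plan is to first cut the problem down to \emph{primitive} $1$-hybridization networks and then to enumerate these against the fixed tree underlying the metric. Write $T$ for the unique unrooted binary tree carrying the given metric $d$. By Theorem~\ref{thm}(b), any $1$-hybridization network whose induced distance is a tree metric is automatically primitive; hence the objects to be counted are exactly the primitive $1$-hybridization networks $N$ for which some choice of weights $(w,\beta)$ yields $d_{(N,w,\beta)}=d$.

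Next I would nail down the shape of such an $N$. Its root $\rho$ has two children $p,q$, and primitivity (together with the temporal condition~(vi), which forbids the root itself from being a reticulation source) forces $p,q$ to be precisely the two tree vertices feeding the unique reticulation vertex $v$. Each of $p,q,v$ emits a single outgoing tree arc, leading to rooted subtrees $A,B,C$ whose leaf sets partition $X$; thus $N$ is determined by the data $(\{A,B\},C)$, the pair $\{A,B\}$ being unordered because of the evident $p\leftrightarrow q$ symmetry. Deleting either reticulation arc and suppressing the two resulting degree-two vertices shows that both displayed trees $T_1,T_2$ share the single unrooted shape in which $A,B,C$ are attached at a common vertex. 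By Proposition~\ref{prophelps}(a), $d_N$ is then a tree metric on that shape, so for $d_N=d$ this shape must equal $T$; equivalently, the gadget $\{\rho,p,q,v\}$ is seated at an internal vertex $m$ of $T$, with $C$ one of the three subtrees meeting at $m$.

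I would then verify realizability, which turns the problem into a purely combinatorial count. Reading off the induced weights, the edge of $T$ towards $C$ is reproduced exactly by the weight of $v\to C$, while the two edges towards $A$ and $B$ equal the target weights minus $(1-\alpha)$ and $\alpha$ times $w(\rho\to p)+w(\rho\to q)$; choosing this root-arc sum small keeps all weights strictly positive, and the internal weights of $A,B,C$ are copied directly. Thus every admissible seating of the gadget actually realizes $d$ for an open set of parameters, with no further constraint, so it remains only to count the seatings.

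The remaining, and principal, task is the enumeration against the $n-3$ internal edges of $T$. The target $4(n-3)$ points to a bijection assigning to each realizing network an internal edge together with one of four local configurations, the four arising as the two endpoints of the edge (where the gadget vertex $m$ may sit) combined with a binary choice of which adjacent subtree plays the hybrid role $C$. I would establish this by recording, for each $N$, the partition $(\{A,B\},C)$ and locating it relative to the internal edges of $T$, then checking the reverse construction. The crux, and main obstacle, is to prove that this assignment is a genuine bijection: one must confirm that every realizing $N$ is hit exactly once, that the unordered nature of $\{A,B\}$ is reflected so the two ``upper'' subtrees are neither collapsed nor double-counted, and that the seatings in which one of $A,B,C$ is a single leaf are reconciled with the internal-edge bookkeeping without coincidence. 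Once the bijection is in place the total $4(n-3)$ follows immediately.
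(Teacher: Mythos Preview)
Your structural analysis is careful and essentially correct through the realizability step: a primitive $1$-hybridization network realising $d$ is indeed determined by a pair $(\{A,B\},C)$ where $A,B,C$ are the three subtrees of $T$ meeting at an internal vertex $m$, with $C$ the hybrid subtree, and every such triple does occur. But count what this parametrization actually gives you: there are $n-2$ internal vertices in $T$, and at each one three choices of which subtree plays the hybrid role, for a total of $3(n-2)$ networks. This differs from the target $4(n-3)$ for every $n\neq 6$, so the bijection you sketch in the final paragraph---matching each network to an internal edge together with one of four local configurations---cannot exist as stated. The obstacle you flagged as ``the crux'' is not a bookkeeping nuisance but an arithmetic obstruction.

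The paper argues along a different line: it places the root on each of the $n-3$ internal edges and, for each such rooting, lets one of the four depth-two subtrees be the hybrid; it then asserts that rootings on pendant edges add nothing new. Before trying to force a bijection, you should run both parametrizations side by side on a small case (the caterpillar on five leaves is already revealing). You will see that your list and the paper's internal-edge list do not coincide: when an internal vertex $m$ is incident to two internal edges, the network with $C$ equal to the lone pendant subtree at $m$ is listed twice in the paper's scheme, while when $m$ is a cherry vertex the network with $C$ equal to the subtree across the unique internal edge at $m$ appears in your list but in the paper's scheme only via a pendant-edge rooting. The discrepancy is therefore genuine, and you need to settle which enumeration is intended before any bijective argument can go through.
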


\begin{proof}
Each of the $n-3$ internal edges on the unrooted tree correspond to a choice for the root, and each one gives four distinct 1-hybridization networks, since each of the four subtrees nearest the root could be the one descending from the hybridization (see Figure~\ref{fig:four.retics}).
Each hybridization network from a root placed on an external edge can  also be obtained from a root placed on an internal edge, and so this adds no new hybridization networks.  
\end{proof}

\begin{figure}[ht]
\centering
      \newdimen\nodeSizeb
      \nodeSizeb=6mm
\resizebox{.9\textwidth}{2cm}{
\includegraphics{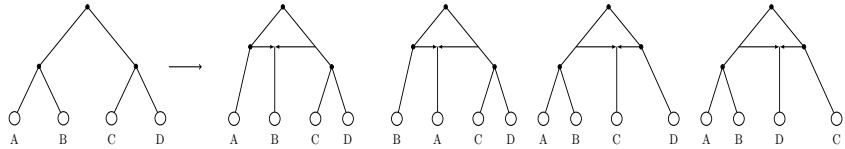}
}
\caption{The four distinct reticulated trees arising from choices of reticulation on a single internal edge of $T$, shown with the root placed on it on the left hand figure.  A, B, C and  D represent  subtrees of $T$ with the root placed at an internal node. }\label{fig:four.retics}
\end{figure}

\section{HGT Networks}
Our main result for hybridization networks (in the previous section) applies only in one direction for HGT networks.  If  a single reticulation occurs between the arcs of $N$ that are incident with the root 
then we obtain a tree metric. However, for HGT networks,  it is possible for tree metrics to arise under other scenarios, both for a single reticulation event, and for multiple ones.  We now describe two results that demonstrate how this can occur. Recall that $T_N$ is the rooted phylogenetic $X$-tree associated with an HGT network $N$, obtained by deleting all the reticulation arcs.  

\begin{lem}
\label{lemsister}
For any HGT network $N$, if each reticulation arc is  between adjacent tree arcs of $T_N$, then $d_N$ is tree-like on $T_N$.
\end{lem}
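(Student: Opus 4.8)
The plan is to reduce the statement to Proposition~\ref{prophelps}(a). Recall that the tree displayed by retaining the (unique) incoming tree arc at every reticulation vertex is exactly $T_N$, so $T_N \in \T(N)$, and since $\beta_f>0$ for all $f$, every tree in $\T(N)$ contributes to $d_N$. By Proposition~\ref{prophelps}(a) it therefore suffices to prove that every tree $T_f \in \T(N)$ is isomorphic, as an unrooted phylogenetic $X$-tree, to $T_N$; the proposition then immediately gives that $d_N$ is a tree metric represented by $T_N$, which is precisely what it means for $d_N$ to be tree-like on $T_N$.

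First I would analyse the effect of a single reticulation. Fix a reticulation vertex $v$, with incoming tree arc $(p,v)$, incoming reticulation arc $(s,v)$, and outgoing tree arc $(v,q)$, and let $c$ be the tree child of $s$. Passing from $T_N$ (where $(s,v)$ is deleted and $s,v$ are suppressed) to the displayed tree in which the reticulation arc is selected at $v$ amounts to \emph{detaching} the pendant subtree below $v$, with leaf set $L_v$ say, from the interior of the tree arc $e_v$ of $T_N$ on which $v$ lies, and \emph{reattaching} it into the interior of the tree arc $e_s$ on which $s$ lies. The key observation is that the hypothesis --- that the reticulation arc runs between adjacent tree arcs of $T_N$ --- means precisely that $e_s$ and $e_v$ share a common endpoint $z$. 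I would then verify, by comparing the induced systems of $X$-splits, that relocating the attachment of $L_v$ from one edge incident with $z$ to another edge incident with $z$ leaves the split system unchanged: both placements induce the split $L_v \mid X \setminus L_v$, and all remaining splits are those of the star neighbourhood of $z$ refined in the same way, the vertices $s$ and $p$ merely exchanging roles after suppression. Hence a single switch preserves the unrooted topology.

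The main obstacle is passing from a single switch to an arbitrary $f \in F_N$, that is, to simultaneous switches at several reticulation vertices. The adjacency hypothesis is phrased relative to the fixed tree $T_N$, whereas successive switches are structural operations on $N$ whose local pictures can overlap when reticulation arcs are nested or share one of the vertices $p$, $s$, $q$, or $z$. I would organise this as an induction on the number of selected reticulation arcs, at each step checking that the split $L_v \mid X \setminus L_v$ produced by the next switch, together with the unaffected splits, still coincides with the split system of $T_N$; the crucial point to nail down is that the edge $e_s$ into which $L_v$ is reattached, and its shared vertex $z$, survive the earlier switches, so that each switch continues to act as an attachment-point move around a genuine vertex of the current tree. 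Once this bookkeeping is complete, every $T_f$ carries the split system of $T_N$, so all trees in $\T(N)$ are isomorphic as unrooted $X$-trees, and Proposition~\ref{prophelps}(a) finishes the proof.
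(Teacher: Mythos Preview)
Your approach is essentially the same as the paper's: both reduce the lemma to Proposition~\ref{prophelps}(a) by arguing that every tree in $\T(N)$ is isomorphic (as an unrooted phylogenetic $X$-tree) to $T_N$. The paper's proof, however, makes a stronger and simpler observation: when each reticulation arc runs between adjacent tree arcs of $T_N$, every displayed tree is in fact isomorphic to $T_N$ \emph{as a rooted} phylogenetic $X$-tree (and hence, a fortiori, as an unrooted one). This is because the temporal constraint forces the shared endpoint $z$ of the two adjacent $T_N$-edges to lie \emph{above} both $s$ and $v$, so switching from the tree arc to the reticulation arc at $v$ simply interchanges the roles of $z$'s two children and leaves every cluster of $T_N$ intact. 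That rooted viewpoint obviates the split-system analysis you set up and also dissolves the ``main obstacle'' you identify for multiple reticulations: since each switch preserves the full rooted cluster system of $T_N$, any sequence of switches does too, and no inductive bookkeeping about the survival of $e_s$ or $z$ is needed. Your argument is correct, just more laboured than necessary.
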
 
\begin{proof}
If $N$ has the property described,  then every tree in  $\T(N)$ is isomorphic as a rooted phylogenetic $X$-tree to $T_N$, and so these two trees are also isomorphic as unrooted trees.  The lemma now follows from Proposition \ref{prophelps}(a).
\end{proof}

Our main result for this section is the following.

\begin{thm}
\label{main_HGT}
\mbox{ } 
\begin{itemize}
\item[(a)] If an HGT network  $N$ has a single reticulation arc, then $d_N$ is tree-like if and only if that arc is either (i) from one arc to an adjacent arc or (ii) between a root arc and one of the two children of the other root arc.  Moreover, this holds for any (positive) parameters on $N$, and the only tree that harbours a representation for $d_N$ is $T_N$. 
\item[(b)] There exist 2-reticulated HGT networks $N$ that can be represented on $T_N$ and (for other parameter settings) on a tree that is different from $T_N$, even when the mixing distribution treats the two reticulations independently.
\end{itemize} 
\end{thm}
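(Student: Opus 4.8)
The plan is to handle the two parts by quite different means: part (a) I would reduce to a purely combinatorial question about the displayed trees and settle with a subtree-move argument, while part (b) I would prove by an explicit construction. For part (a), since $N$ has a single reticulation arc $u\to v$, the set $\T(N)$ consists of exactly the two trees obtained by retaining at the unique reticulation vertex $v$ either its incoming tree arc (giving $T_N$) or its incoming reticulation arc (giving a second tree $T'$). By Proposition~\ref{prophelps}, $d_N$ is tree-like if and only if $T_N$ and $T'$ coincide as \emph{unrooted} phylogenetic $X$-trees, and in that case the representing tree is the unrooted form of $T_N$; this dichotomy is independent of $w$ and $\beta$, which disposes of the ``any positive parameters'' clause, while the ``only tree is $T_N$'' clause follows from the uniqueness of the tree representing a given tree metric recalled in Section~\ref{sec:tree.metrics}. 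So all of (a) reduces to deciding when $T_N\cong T'$ as unrooted trees. Passing from $T_N$ to $T'$ detaches the pendant subtree $C$ hanging below $v$ from its tree-parent $p$ and reattaches it at the source vertex $u$, i.e.\ it is a single prune-and-regraft move.

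The engine for both directions is the observation that $T_N$ and $T'$ restrict to the \emph{same} unrooted tree $B$ on $X\setminus C$, since they differ only in where $C$ attaches. Thus $T_N$ is $B$ with $C$ grafted on some edge $e_{\mathrm{old}}$ and $T'$ is $B$ with $C$ grafted on some edge $e_{\mathrm{new}}$, and because $B$ is leaf-labelled it has no non-trivial label-preserving automorphism, so $T_N\cong T'$ unrooted if and only if $e_{\mathrm{old}}=e_{\mathrm{new}}$, that is, $u$ lies on the edge of $B$ along which $C$ was attached. Translating this through the suppression of degree-two vertices, $e_{\mathrm{old}}$ is, in $T_N$, the path obtained by merging the arc into $p$ with the arc from $p$ to the sibling of $v$; when $p$ is a child of the root this path additionally absorbs a root arc once the root is suppressed in $B$. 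Unravelling the two possibilities gives exactly (i) $u$ on an arc adjacent to the tree arc at $v$ (the case of Lemma~\ref{lemsister}) and (ii) the reticulation running between a root arc and a child of the other root arc. I would finish by exhibiting a temporal labelling $t$ for each configuration, so that they are genuine HGT networks. The delicate step is the root bookkeeping: showing that root-suppression extends $e_{\mathrm{old}}$ across the root in precisely the way that yields (ii), and that no other placement of $u$ satisfies $u\in e_{\mathrm{old}}$, is where the argument must be done carefully rather than by the generic prune-and-regraft heuristic.

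For part (b) I would give an explicit example, since the claim is an existence statement. Note first that for $d_N$ to be a tree metric at all we cannot have $\T(N)$ split into exactly two unrooted isomorphism classes, by Proposition~\ref{prophelps}(b); the example must therefore display three or four pairwise non-isomorphic unrooted trees, so that tree-likeness is an \emph{accidental} coincidence of weights rather than being forced by Proposition~\ref{prophelps}(a). I would construct a small HGT network on four taxa with two reticulation arcs, arranged so that its (at most four) displayed trees realise more than one of the three quartet topologies, and take the independent mixing $\beta_f=\prod_{v}\alpha(f(v))$. On four taxa the four-point condition reduces to requiring that, among the three sums $d(a,b)+d(c,d)$, $d(a,c)+d(b,d)$ and $d(a,d)+d(b,c)$, the two largest be equal, with the smallest naming the resolved split. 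Writing these three sums as explicit functions of the edge weights and of $\alpha_1,\alpha_2$, I would then exhibit one parameter regime selecting the quartet of $T_N$ and a second regime selecting a different quartet. The only real obstacle is to choose a network whose displayed weighted trees make these sums independently tunable enough to place the equality on two different pairs; once such a network is in hand, verifying the two regimes is a direct computation.
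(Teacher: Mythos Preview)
Your proposal is correct, and for part (b) it coincides with the paper's approach: the paper exhibits an explicit 2-reticulated HGT network on four leaves, computes the three four-point sums $S_1,S_2,S_3$ as functions of the edge weights and the independent mixing parameters $\alpha,\alpha'$, and then displays one parameter regime with $S_1\le S_2=S_3$ (yielding $T_N$) and another with $S_3\le S_1=S_2$ (yielding a different quartet). One small slip: on four taxa there are only three unrooted binary topologies, so your ``three or four pairwise non-isomorphic unrooted trees'' should read ``all three''; your later phrasing ``more than one of the three quartet topologies'' is the right constraint and is exactly what the paper's example achieves.

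For part (a), your route is genuinely different from the paper's. The paper's ``only if'' argument is computational: assuming neither (i) nor (ii) holds, it extracts a quartet on which the single HGT lands between non-adjacent, non-root arcs, writes out $S_1,S_2,S_3$ explicitly, and shows the four-point condition forces $a=0$ or $\alpha=0$. Your argument is structural: you observe that $T'$ is obtained from $T_N$ by a single prune-and-regraft of the pendant subtree $C$, and that $T_N\cong T'$ as unrooted $X$-trees if and only if the old and new attachment points lie on the same edge of the background tree $B=T_N|_{X\setminus C}$. Tracking which vertices of $N$ get suppressed in forming $B$ (namely $u$, $p$, and the root, with one extra cascade when $p$ is the root) then yields precisely cases (i) and (ii) and nothing else. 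This buys you a cleaner explanation of \emph{why} the root enters the picture: it is the unique additional degree-2 vertex that can bridge $u$ and $p$ after unrooting. The paper's computation, by contrast, is shorter to write down and sidesteps the suppression bookkeeping you flag as delicate, at the cost of being less illuminating about where (ii) comes from. Your justification ``$B$ has no non-trivial label-preserving automorphism'' is slightly oblique; the cleaner statement is that the attachment edge of $C$ is recoverable from the split system of the resulting $X$-tree, so different attachment edges give different trees.
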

 
\begin{proof}
{\em Part (a):}  For the `if' part, condition (i) suffices by Lemma~\ref{lemsister}.  For case (ii), we note that although the two trees in $\T(N)$ are no longer isomorphic to the same rooted phylogenetic $X$-tree, they are isomorphic to the same unrooted phylogenetic $X$-tree, so Proposition~\ref{prophelps} applies.    
 
For the `only if' direction, suppose that neither condition (i) nor (ii) is satisfied.  That is, the reticulation arc is not between adjacent arcs and not from a root arc to one of the two children of the other arc.  There is a quartet then in which the reticulation is between non-adjacent and non-root arcs, in which case, if we suppress the location of the root, it corresponds to the scenario shown in  Figure~\ref{fig:hgt.quartets}, up to permutation of the leaves.

Let us abbreviate the sums of distances arising in the four-point condition as $S_1=d(1,2)+d(3,4)$, $S_2=d(1,3)+d(2,4)$ and $S_3=d(1,4)+d(2,3)$.
Ignoring the terms that appear in every sum (shown as $\ast$ in Figure~\ref{fig:hgt.quartets}),  the quartet distance sums in the case shown in Figure~\ref{fig:hgt.quartets} are:
\begin{align*}
S_1&=a+[\alpha(a+b)+(1-\alpha)c],\\
S_2&=[(1-\alpha)(b+c)+\alpha a]+(a+b),\\
S_3&=b+[(1-\alpha)(a+b+c)].
\end{align*}
Noting that $S_1< S_2$ since $\alpha<1$, for these quartets to satisfy the four-point condition we must have $S_2=S_3$.  However, this implies that either $a=0$ or $\alpha=0$, which is a contradiction. 

\begin{figure}[ht]
\centering
\footnotesize
\includegraphics{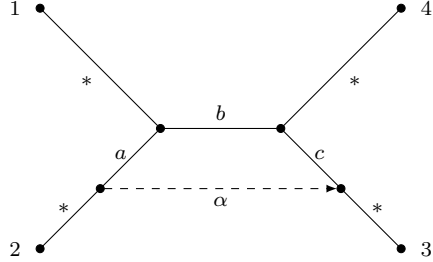}
\caption{The generic case of a single HGT from a tree arc to a non-adjacent tree arc.  The rooting of the tree has been suppressed to simplify the analysis; however, there are six locations where the root can be placed to subdivide the tree arcs shown (the arc labelled $*$ leading to leaf $3$ cannot contain the root, as this would create a directed cycle in the network,  but any other tree arc can).    Here $\ast$ denotes weights that occur in each quartet sum in the four-point condition and that hence can be ignored. }
\label{fig:hgt.quartets}
\end{figure}

{\em Part (b):}
It suffices to provide an example. Consider the 2-reticulated network shown in Figure~\ref{fig:2-retic}, where HGT events represented by the branches labelled $\alpha$ and $\alpha'$ occur independently (the \emph{independence model}, as described in the section Average Distances on Networks).

\begin{figure}[ht]
\centering
\footnotesize
\includegraphics{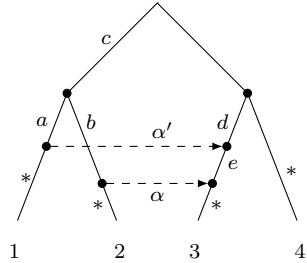}
\caption{A 2-reticulated HGT network $N$ that can be represented on a tree that is different from $T_N$.  Here, we assume that $a,b,c,d,e>0$ and $0<\alpha,\alpha'<1$.}\label{fig:2-retic}
\end{figure}

We have the following quartet distances involved in the four-point condition, with $S_i$ being as defined in (a):
\begin{align*}
S_1	&=[a+b]+[(1-\alpha)(1-\alpha')(d+e)+\alpha(b+c)+(1-\alpha)\alpha'(a+c+e)]\\
	&=(1+(1-\alpha)\alpha')a+(1+\alpha)b+(\alpha+(1-\alpha)\alpha')c+(1-\alpha)(1-\alpha')d+(1-\alpha)e,\\
S_2	&=[(1-\alpha)(1-\alpha')(a+c+d+e)+(1-\alpha)\alpha' e+\alpha(a+b)]+[b+c]\\
	&=(1-(1-\alpha)\alpha')a+(1+\alpha)b+(1+(1-\alpha)(1-\alpha'))c+(1-\alpha)(1-\alpha')d+(1-\alpha)e,\\
S_3	&=[a+c]+[(1-\alpha)\alpha'(a+b+e)+(1-\alpha)(1-\alpha')(b+c+d+e)]\\
	&=(1+(1-\alpha)\alpha')a+(1-\alpha)b+(1+(1-\alpha)(1-\alpha'))c+(1-\alpha)(1-\alpha')d+(1-\alpha)e.
\end{align*}

In the underlying tree ($T_N$) of the network we have $S_1$ as the smaller of these, so that $S_1\le S_2=S_3$.  The equality of $S_2$ and $S_3$ requires $\alpha b=(1-\alpha)\alpha' a$, and $S_1\le S_2$ implies $\alpha' a\le (1-\alpha')c$.  Together we require 
\[\alpha b=(1-\alpha)\alpha' a\le (1-\alpha)(1-\alpha')c,\]
which is certainly possible for some regions of the parameter space.

However, there are alternative solutions, as required by the theorem.  For instance, it is possible to have $S_3$ as the shortest of the three quartet distances, so that $S_3\le S_2=S_1$.  This is possible so long as
\[(1-\alpha)(1-\alpha')c=(1-\alpha)\alpha' a\le \alpha b.\]
Just to be explicit, this is possible whenever, for example, $ \alpha\ge \frac{1}{2}$, $ b\ge a$ and $\alpha' a = (1-\alpha') c$.
The unrooted tree that realizes this metric has taxa 1 and 4 together and  taxa 2 and 3 together ($14|23$), and is \emph{not} $T_N$.
This completes the proof of Theorem~\ref{main_HGT}.
\end{proof}

\section{Discussion and Further Questions}
The four point condition provides a very precise characterization, in terms of pairwise distances between taxa, of the circumstances under which a metric is able to be displayed on a tree (see the section on Tree Metrics).  It is so successful that it is tempting to assume that once a metric satisfies this condition then we have a tree, and that that is the end of the story. However the results in this paper show that such ``tree metrics" can \emph{also} be realised as hybridization and HGT networks.  Any surprise at this conclusion may be partly due to the biconditional statement of the four point condition; namely that a metric is a tree metric \emph{if and only if} it satisfies the condition.  Superficially this appears to leave little room to maneouvre.  However, as we show, being realised on a tree does not preclude the possibility that the metric can also be realised on a reticulation network.

The practical implication of this wriggle-room is that phylogenies displaying tree metrics may in fact involve hybridization or horizontal gene transfer in their histories.  However, the results in this paper also show that `all hell is not about to break loose': for the network to be a hybridization network, strict restrictions apply (Theorem~\ref{thm}).  In particular, there can be at most one hybridization event and it must be adjacent to the root.  However, such restrictions do not hold for HGT networks (Theorem~\ref{main_HGT}).  In this case there is some control when the network contains a single reticulation, but surprisingly, it is also possible to have a tree metric displayed on an HGT network with more than one reticulation.  

While it is biologically unlikely for a single network to contain both hybridization and HGT events, these results leave open several intriguing questions for further study.  For instance:
\begin{enumerate}
\item   It would be interesting to determine how far Theorem~\ref{main_HGT}(b) extends. For example, is the following true? For {\em any} two binary phylogenetic $X$-trees $T_1$ and $T_2$ (where $X$ can be of any size), is there an HGT network for which $T_N = T_1$ and yet where $d_N$ is representable on $T_2$  (where the mixing distribution is given by the independence model)?
\item   How do our results change if we allow some leaves to be missing (due to extinction or sampling omission)?
\item  Let $\rho(d)$ denote the minimum number of hybridizations required to represent $d$ on a hybridization or an HGT network. What conditions characterise those metrics $d$ with $\rho(d)=1$?  What about $\rho(d)=k$ for any $k\ge 1$?
\end{enumerate}

\section{Materials and Methods}
\subsection{Proof of Proposition~\ref{prophelps}}\label{app:prop2}
In the following, for a subset $q$ of $X$ of size 4 (a `quartet'), we use $T|q$ to denote the phylogenetic tree with leaf set $q$
that is induced by the $X$-tree $T$ on $q$.   Moreover, if $q=\{x,y,w,z\}$, we write $T|q = xy|wz$ if the path in $T$ connecting $x$ and $y$ is vertex-disjoint from the path in $T$ connecting $w$ and $z$.

Recall the statement of Proposition~\ref{prophelps}:

\begin{prop1} 
\mbox{ }
\begin{itemize}
\item[(a)]
Suppose that  all the trees in $\T(N)$ are isomorphic as unrooted phylogenetic $X$-trees to some tree $T$. Then $d_N$ is a tree metric that is represented by $T$.
\item[(b)]If the trees in $\T(N)$ can be partitioned into two non-empty isomorphism classes of unrooted trees, then $d_N$ is not a tree metric.
\end{itemize}
\end{prop1} 

Both parts of Proposition~\ref{prophelps} follow from the respective parts of the following Lemma~\ref{lem1}, noting that in part (b), if two trees are non-isomorphic as unrooted trees, then they must resolve at least one quartet differently.    

\begin{lem}
\label{lem1}
\mbox{ } 
Let $(T_1, w_1), (T_2, w_2), \ldots, (T_k, w_k)$ be a sequence of  phylogenetic $X$-trees with associated strictly positive edge weights.
\begin{itemize}
\item[(a)]
If $T_i=T$ for all $i$, then for any values $\beta_i\in\R^{\ge 0}$, we have:
$$\sum_{i=1}^k \beta_id_{(T, w_i)} = d_{(T, w)},$$
for the positive edge weights $\displaystyle w=\sum_i \beta_i w_i$ on $T$.
\item[(b)]
Suppose that there is a quartet $q \subseteq X$,  for which $|\{T_i|q, i=1, \ldots, k\}|=2$.  Then for any values $\beta_i\in\R^{>0 }$, we have:
$$\sum_{i=1}^2 \beta_id_{(T_i, w_i)} \neq d_{(T, w)}$$
for any phylogenetic $X$-tree $T$ having  non-negative edge weights $w$.
\end{itemize}
\end{lem}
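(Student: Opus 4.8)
The plan is to treat the two parts separately: part (a) is a direct consequence of the linearity of the path-sum formula in the edge weights, while part (b) reduces to showing that the candidate distance violates the four-point condition on a single, well-chosen quartet. For part (a) I would fix any pair $x,y\in X$ and note that, because $T_i=T$ for every $i$, the connecting path $P(T;x,y)$ is independent of $i$. Inserting the defining formula for $d_{(T,w_i)}$ and interchanging the two finite summations then gives
\[
\sum_{i=1}^k \beta_i\, d_{(T,w_i)}(x,y)=\sum_{e\in P(T;x,y)}\Big(\sum_{i=1}^k\beta_i w_i(e)\Big)=\sum_{e\in P(T;x,y)} w(e)=d_{(T,w)}(x,y),
\]
with $w=\sum_i\beta_i w_i$. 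Since each $w_i$ is strictly positive and the $\beta_i$ are non-negative and not all zero (in the applications they sum to $1$), $w$ is again strictly positive, so $(T,w)$ is a legitimate weighted phylogenetic $X$-tree; this settles (a).

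For part (b) I would invoke the four-point condition of Section~\ref{sec:tree.metrics}: since every tree metric satisfies it on every quartet, it suffices to exhibit one quartet on which $d=\beta_1 d_{(T_1,w_1)}+\beta_2 d_{(T_2,w_2)}$ fails it. The natural candidate is the quartet $q$ supplied by the hypothesis, on which $T_1$ and $T_2$ induce different resolutions; after relabelling I may assume $q=\{1,2,3,4\}$ with $T_1|q=12|34$ and $T_2|q=13|24$. Writing $S_1=d(1,2)+d(3,4)$, $S_2=d(1,3)+d(2,4)$, $S_3=d(1,4)+d(2,3)$ for the three four-point sums of $d$, and $S_j^{(i)}$ for the corresponding sums computed in $(T_i,w_i)$, the key structural facts are that a resolved quartet metric makes the sum matching its own topology the strict minimum while the other two sums coincide: thus $S_1^{(1)}<S_2^{(1)}=S_3^{(1)}$ and $S_2^{(2)}<S_1^{(2)}=S_3^{(2)}$, where the strictness comes from the strictly positive length of the internal edge of each induced quartet tree.

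The crux is then a short computation showing that the third topology $14|23$, which is realised by neither $T_1$ nor $T_2$, forces $S_3$ to be the unique maximum for $d$. Using the equalities above,
\[
S_3-S_1=\beta_1\big(S_3^{(1)}-S_1^{(1)}\big)+\beta_2\big(S_3^{(2)}-S_1^{(2)}\big)=\beta_1\big(S_2^{(1)}-S_1^{(1)}\big)>0,
\]
and symmetrically $S_3-S_2=\beta_2\big(S_1^{(2)}-S_2^{(2)}\big)>0$, both strict because $\beta_1,\beta_2>0$ and the relevant internal edges have positive weight. Hence $S_3>S_1$ and $S_3>S_2$, so no two of the three sums tie for the maximum, the four-point condition fails on $q$, and $d$ cannot equal any tree metric $d_{(T,w)}$. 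I expect the main obstacle to be bookkeeping rather than depth: one must ensure the strict inequalities genuinely survive (which is exactly where the hypotheses $\beta_i>0$ and strictly positive edge weights are used, so that the induced quartet-internal edges are strictly positive), and check that passing to the two-tree, single-quartet picture costs nothing — indeed the same computation goes through verbatim when several trees share each of the two quartet topologies, which is the form needed when this lemma is fed into Proposition~\ref{prophelps}(b).
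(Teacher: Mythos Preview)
Your proof is correct and follows essentially the same route as the paper. For part~(a) both you and the paper simply interchange the two finite sums; for part~(b) both arguments show that on the distinguished quartet the four-point sum corresponding to the third (unused) topology is the strict maximum of the three, the only cosmetic difference being that the paper first explicitly groups the $k$ trees into two classes via Part~(a) before running the inequality, whereas you do the two-tree case directly and then remark that the grouping extends it.
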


\begin{proof}
{\em Part (a)}: By the definitions and the interchange of the order of summation we have, for any $x,y \in X$:
$$\sum_i \beta_i d_{(T, w_i)}(x,y) = \sum_i \beta_i \sum_{e \in P(T; x,y)} w_i(e) = \sum_{e \in P(T; x,y)} w(e) = d_{(T,w)}(x,y).$$

{\em Part (b)}: 
Suppose that $q=\{x,y,w,z\}$ satisfies the condition stated, with $T_j|q = xy|wz$ for all $j\in J \subseteq [k]=\{1,\ldots, k\}$, and  $T_j|q = xz|wy$ for all $j\in [k]-J$, for some  non-empty proper subset $J$ of $[k]$.
Let $d_1 = \sum_{j \in J} \beta_jd_{(T_j,  w_j)}$, 
$d_2 = \sum_{j\in [k]-J} \beta_j d_{(T_j,  w_j)}$
and $d= d_1+d_2 = \sum_{j\in [k]} \beta_j d_{(T_j,  w_j)}.$ 
By Part (a) and the four-point condition, we have:  $$d_1(x,y)+ d_1(w,z) < d_1(x,w) + d_1(y, z) = d_1(x,z)+d_1(y,w);$$
and 
$$d_2(x,z)+ d_2(w,y) < d_2(x,y) + d_2(w, z) = d_2(x,w)+d_2(y,z).$$
It follows that:
\begin{align*}
d(x,w) + d(y,z) 
  &=(d_1(x,w)+d_1(y,z))+(d_2(x,w)+d_2(y,z))\\
  &>(d_1(x,y)+d_1(w,z))+(d_2(x,y)+d_2(w,z))\\
  &=d(x,y)+d(w,z).
\end{align*}
Similarly $d(x,w) + d(y,z) >d(x,z)+d(y,w)$. Therefore,
$$d(x,w) + d(y,z) > \max\{d(x,y)+d(w,z), d(x,z)+d(y,w)\},$$ 
violating the four-point condition.
Thus $d$ has no realisation on any unrooted phylogenetic tree (binary or not) with non-negative edge weights.
\end{proof}

\subsection{Proof of Theorem~\ref{thm}}\label{app:thm2}
Recall the statement of Theorem~\ref{thm}:

\begin{thm2}
Let $X$ be a finite set of taxa, and suppose $d$ is a metric on $X$, $d:X\times X\to\R^{\ge 0}$.
\begin{enumerate}
\item[(a)]
If $d$ is a binary tree metric, then there exists a primitive 1-hybridization network $N$ and weights $w,\beta$ such that $d=d_{(N, w, \beta)}$.

\item[(b)] If $N$ is a hybridization network, and  $d=d_{(N, w, \beta)}$ is a tree metric for some $w,\beta>0$, then $N$ is either a tree, or $N$ is a primitive 1-hybridization network.
\end{enumerate}
\end{thm2} 

The proof of Theorem~\ref{thm} relies on first establishing some preliminary results.

\begin{lem}
\label{lem3}
The two binary phylogenetic $X-$trees displayed by a 1-hybridization network are isomorphic as unrooted trees if and only if the tree is a primitive 1-hybridization network.
\end{lem}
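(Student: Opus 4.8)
The plan is to analyze the structure of a 1-hybridization network directly in terms of the two trees it displays, exploiting the fact that the single reticulation vertex $v$ has exactly two incoming reticulation arcs, with source vertices $p_1, p_2$. Deleting one incoming arc yields the first displayed tree $T_1$, and deleting the other yields $T_2$. First I would set up notation: let $v$ be the reticulation vertex, let $c$ be its unique child (the root of the subtree $S_v$ hanging below $v$), and let $p_1, p_2$ be the two parents of $v$. In $T_i$ we suppress $v$, so the subtree $S_v$ becomes attached directly to (the suppression of) $p_i$. The key observation is that $T_1$ and $T_2$ differ precisely in \emph{where} the pendant subtree $S_v$ is attached along the rest of the tree: in $T_1$ it hangs off the location of $p_1$, in $T_2$ off the location of $p_2$.

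The core of the argument is the claim that $T_1 \cong T_2$ as \emph{unrooted} trees if and only if moving the attachment point of $S_v$ from $p_1$ to $p_2$ is an isomorphism-preserving move, and this happens exactly in the primitive case. For the ``if'' direction (primitive implies isomorphic), I would observe that in a primitive 1-hybridization network the source vertices $p_1, p_2$ are both adjacent to the root. Then in each $T_i$, after suppressing the root and the degree-two vertices, the two trees agree everywhere except possibly on the placement of $S_v$ relative to the root region; because both attachment points are adjacent to the root, the rerooting/suppression washes out the distinction in the unrooted picture, so $T_1$ and $T_2$ induce the same unrooted tree. I would verify this concretely by checking that $T_1$ and $T_2$ resolve every quartet the same way, perhaps via Figure~\ref{figure2}, where the two displayed rooted trees are visibly non-isomorphic as rooted trees but identical as unrooted trees.

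For the harder ``only if'' direction (isomorphic implies primitive), I would argue the contrapositive: if the network is not primitive, then at least one source vertex $p_i$ is \emph{not} adjacent to the root, and I must exhibit a quartet resolved differently by $T_1$ and $T_2$. The idea is that when $p_1, p_2$ are not both next to the root, there is enough ``tree structure'' separating them that I can pick a leaf $x$ from $S_v$, a leaf $y$ from the subtree distinguishing $p_1$ from $p_2$, and two further leaves $z, w$ placed so that the relative position of $x$ flips between $T_1$ and $T_2$. Concretely, one wants a leaf that is on one side of the path between the attachment points in $T_1$ and on the other side in $T_2$. I would use condition~(vi) (the temporal/acyclicity constraint) and condition~(v) (every interior vertex has an outgoing tree arc) to guarantee that such separating leaves exist whenever the attachment points are not both root-adjacent.

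I expect the main obstacle to be the ``only if'' direction, specifically the bookkeeping needed to produce the distinguishing quartet in full generality: one must handle the various configurations of where $p_1$ and $p_2$ sit relative to each other and to the root (e.g. one adjacent to the root but not the other, neither adjacent, or one an ancestor of the other along a tree path), and rule out accidental isomorphisms arising from symmetric subtrees. The cleanest route is probably to reduce to the quartet level: show that non-primitivity forces the existence of four leaves whose induced quartet topology differs between $T_1$ and $T_2$, which by the four-point characterization witnesses non-isomorphism of the unrooted trees. Once that quartet is exhibited, the conclusion is immediate; the labor is entirely in the case analysis guaranteeing its existence. This lemma then feeds directly into Theorem~\ref{thm} via Proposition~\ref{prophelps}: isomorphic displayed trees give a tree metric (part~(a)), and non-isomorphic ones preclude it (part~(b)).
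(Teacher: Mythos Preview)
Your approach is essentially the paper's: the `if' direction is immediate, and the `only if' proceeds by contrapositive, exhibiting a quartet that the two displayed trees resolve differently. Where you anticipate a case analysis over the relative positions of $p_1,p_2$ and the root, the paper sidesteps this entirely with a single uniform construction: from non-primitivity it extracts a vertex $w$ having a tree-arc descendant leaf $z$ that is \emph{not} a descendant of $p_1$, $p_2$, or $v$; then taking leaves $x,x',y$ below $p_1,p_2,v$ respectively (guaranteed by condition~(v)), the quartet $\{x,x',y,z\}$ is resolved as $xy|x'z$ in one displayed tree and $x'y|xz$ in the other, with no further cases needed. Your worry about ``accidental isomorphisms arising from symmetric subtrees'' is misplaced: phylogenetic $X$-trees are leaf-labelled, so a single differently-resolved quartet already certifies non-isomorphism.
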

\begin{proof}
The `if' part is clear.  Conversely, suppose that a 1-hybridization network is not primitive.  Then if $(u,v)$ and $(u',v)$ denote the two reticulation arcs, there is a vertex $w$ of $N$ 
that has a leaf $z$ as a descendant (following a path of tree edges) that is not a descendant of $u, u'$ or $v$.  
Select leaves $x, x'$ and $y$ that
are descendants (following a path of tree arcs) of $u, u'$ and $v$, respectively.  Then for the two induced phylogenetic $X$-trees obtained from $N$, one tree resolves the quartet $\{x,x',y,z\}$ as $xy|x'z$, while the other tree resolves the quartet as $x'y|xz$ (as above, the vertical bar here refers to the path that connects the pair of taxa on the left being vertex-disjoint from the path connecting the pair of taxa on the right). It follows that these two induced $X$-trees are not isomorphic as unrooted trees \citep{sem}. 
\end{proof}

\begin{cor}
\label{cor:retic.near.root} 
In a 1-hybridization network, with an edge weighting $w$ and a mixing distribution $\beta$, the induced distance function $d_{(N, w, \beta)}$ is equal to $d_{(T,w)}$ for a phylogenetic $X$-tree if and only if the hybridization is between the two edges that are incident with the root. 
\end{cor}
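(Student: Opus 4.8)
The plan is to deduce this corollary almost immediately from Lemma~\ref{lem3} and Proposition~\ref{prophelps}, after a one-line reconciliation of terminology. First I would observe that the condition ``the hybridization is between the two edges that are incident with the root'' is precisely the definition of a \emph{primitive} 1-hybridization network: if $(u,v)$ and $(u',v)$ are the two reticulation arcs, then their source vertices $u,u'$ being adjacent to the root is the same as saying the reticulation is carried between the two tree arcs leaving the root. So the task reduces to showing that $d_{(N,w,\beta)}$ is a tree metric if and only if $N$ is primitive. I would also record at the outset that, since $N$ has a single hybridization vertex, $|F_N|=2^1=2$, so $\T(N)$ consists of exactly the two trees obtained by deleting one or other of the two reticulation arcs, and that both receive positive weight because $\beta$ is strictly positive.

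For the forward implication I would assume $N$ is primitive. Then Lemma~\ref{lem3} gives that the two displayed trees are isomorphic as unrooted phylogenetic $X$-trees, to some tree $T$, and Proposition~\ref{prophelps}(a) applies directly to yield that $d_{(N,w,\beta)}$ is a tree metric represented by $T$, with positive induced edge weights. For the reverse implication I would argue by contrapositive: if $N$ is not primitive, Lemma~\ref{lem3} says its two displayed trees are \emph{not} isomorphic as unrooted trees, hence (as noted just before Lemma~\ref{lem1}) they resolve some quartet differently. Since both trees occur with positive $\beta$-weight, the trees in $\T(N)$ split into two non-empty unrooted-isomorphism classes, so Proposition~\ref{prophelps}(b) applies and $d_{(N,w,\beta)}$ is not a tree metric.

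The argument carries essentially no computational content, since the real work has been front-loaded into Lemma~\ref{lem3} and Proposition~\ref{prophelps}; the only point that needs care is checking that the hypotheses of those two results are genuinely met. In particular, the reverse implication relies on the strict positivity of $\beta$ to guarantee that the second, non-isomorphic tree is actually present in the mixture, so that the partition into isomorphism classes is into two \emph{non-empty} classes; and it relies on the conclusion of Proposition~\ref{prophelps}(b) holding for \emph{every} choice of positive weights and mixing distribution, which is precisely what rules out the possibility that some fortuitous weighting on a non-primitive network could nonetheless produce a tree metric. I would therefore expect the only mild subtlety to be presenting this terminological and hypothesis-matching bookkeeping cleanly, rather than any substantive obstacle.
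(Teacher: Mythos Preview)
Your proposal is correct and follows exactly the paper's approach: the paper's proof is the single line ``By combining Proposition~\ref{prophelps} and Lemma~\ref{lem3},'' and your write-up simply unpacks that combination, using Lemma~\ref{lem3} to characterise primitivity via unrooted isomorphism of the two displayed trees and then invoking parts (a) and (b) of Proposition~\ref{prophelps} for the two directions. The terminological reconciliation and the remark on strict positivity of $\beta$ are appropriate bookkeeping but add nothing new to the argument.
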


\begin{proof}  By combining Proposition~\ref{prophelps}  and  Lemma~\ref{lem3}.
\end{proof}

We have dealt with the case in which a network has a single hybridization and shown that for it to satisfy the four-point condition (i.e. be a tree metric) the hybridization must be in a particular position, namely next to the root (the network must be primitive 1-hybridization network).  This is because both quartet trees must agree, and the only way for this to occur is if the hybridization is in this position.

To deal with the case where the network has more than one hybridization, we require a further result.

\begin{lem}
\label{lems}
If a hybridization network $N$ has four leaves and two hybridization vertices then if we consider the (at most four) trees in  $\T(N)$ and ignore their rooting, they produce exactly two unrooted quartet trees.

\end{lem}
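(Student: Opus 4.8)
The plan is to use the degree constraints to pin down the global shape of $N$, and then to track the unrooted quartet as we independently toggle the two hybridizations. Write $V_R=\{h_1,h_2\}$ for the two hybridization vertices. Since every reticulation arc ends at a hybridization vertex (condition (iv)) while every interior vertex emits at least one tree arc (condition (v)), the only vertices with no incoming tree arc are the root $\rho$ and $h_1,h_2$, and every other vertex has exactly one. Hence the tree arcs of $N$ form a \emph{forest} with exactly three components, rooted at $\rho$, $h_1$ and $h_2$; call them $\mathcal R$, $\mathcal H_1$, $\mathcal H_2$. Each component meets a leaf (by the tree-child property), so the four leaves split as $(2,1,1)$ across the three components. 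A tree in $\T(N)$ is obtained by choosing, for each $h_i$, one of its two parents and grafting the subtree below $h_i$ there (suppressing $h_i$); I index the (at most) four displayed trees by the pair of choices and write $Q(i,j)\in\{12|34,\,13|24,\,14|23\}$ for the resulting unrooted quartet.

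For the \emph{at least two} direction I argue by contradiction. If all four trees shared one unrooted quartet, then fixing the choice at $h_2$ (deleting one incoming arc and suppressing $h_2$) yields a $1$-hybridization network on $h_1$ whose two displayed trees are isomorphic as unrooted trees; by Lemma~\ref{lem3} this network is primitive, i.e.\ both parents of $h_1$ are children of $\rho$. The symmetric argument makes both parents of $h_2$ children of $\rho$ as well. But $\rho$ has only two children, so each child of $\rho$ would have to send a reticulation arc to $h_1$ \emph{and} another to $h_2$, leaving it with no outgoing tree arc and violating condition (v). Hence at least two distinct unrooted quartets occur.

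For the \emph{at most two} direction (equivalently, the third topology never appears) I would run a short structural case analysis organised by the $(2,1,1)$ leaf distribution above and by the locations of the parents of $h_1$ and $h_2$. Two facts keep the case list finite and small. First, a parent of $h_i$ can never lie inside its own component $\mathcal H_i$, since an arc from a descendant back to the component root $h_i$ would create a directed cycle. Second, by equation~(\ref{counter}) exactly four of the five vertices in $\{\rho\}\cup\{\text{tree vertices}\}$ emit a reticulation arc, and the temporal map $t$ of condition (vi) further restricts when one hybridization can sit below the other. In each admissible configuration I would write down the four displayed trees explicitly and read off their quartets, verifying that the two toggles never realise two \emph{different} transpositions of the four leaves, so only two of the three quartets are ever produced.

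The main obstacle is precisely this last point: ruling out the mixed case in which toggling $h_1$ sends $12|34\mapsto 13|24$ while toggling $h_2$ sends $12|34\mapsto 14|23$, which would produce all three topologies. I expect to dispose of it by showing that the interaction forced by acyclicity and by condition (vi) leaves only two possibilities — either $\mathcal H_1$ and $\mathcal H_2$ are grafted onto disjoint blocks of the $(2,1,1)$ partition (so the toggles act on independent leaves but the single central split forces one of them to be trivial), or one hybridization lies below the other (so its toggle is absorbed into the other's moving subtree and cannot introduce a second, incompatible transposition). Once this is confirmed across the finitely many configurations, exactly two unrooted quartets remain, which, together with the \emph{at least two} bound, completes the proof.
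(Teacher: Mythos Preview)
Your plan is much more elaborate than the paper's, which is a bare enumeration: by the counting identity $t=n+r-2$ there are four tree vertices, and one checks directly that (up to relabelling leaves) there are only \emph{two} 2-hybridization networks on four leaves; for each, listing the displayed trees gives exactly two unrooted quartets. That is the entire argument.

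Your ``at least two'' direction via Lemma~\ref{lem3} is a nice idea, but it has a gap as written. When you fix the choice at $h_2$ and suppress, Lemma~\ref{lem3} tells you the parents of $h_1$ are adjacent to the root \emph{in the modified network}. You then transport this back to the original $N$ and conclude that both children of $\rho$ send reticulation arcs to $h_1$ and to $h_2$. But suppression can change which vertices are children of $\rho$: if a child of $\rho$ was a parent of $h_2$ and you delete that arc, that child becomes degree~$2$ and is suppressed, so the root acquires a new child. So the step ``parents of $h_1$ are the two children of $\rho$ in $N$'' does not follow without a further case split on whether the children of $\rho$ are themselves parents of $h_1$ or $h_2$. (This can be repaired, but the repair is already a small case analysis.)

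Your ``at most two'' direction is explicitly left as a promised case analysis over the finitely many configurations. Once you observe that $t=4$ and that each tree vertex emits exactly one reticulation arc, that case analysis \emph{is} the enumeration of all 2-hybridization networks on four leaves --- i.e.\ exactly the paper's proof. So your proposal ultimately reduces to the paper's argument for the harder direction, with an additional (and slightly leaky) structural detour for the easier one. The cleaner route is to do the enumeration up front: two shapes, two quartets each, done.
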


\begin{proof}
This is an elementary check, as follows.  
There are only two 2-hybridization networks on four leaves, up to symmetry, namely those in Fig.~\ref{fig:2.retics}.

\begin{figure}[ht]
\begin{center}
\includegraphics{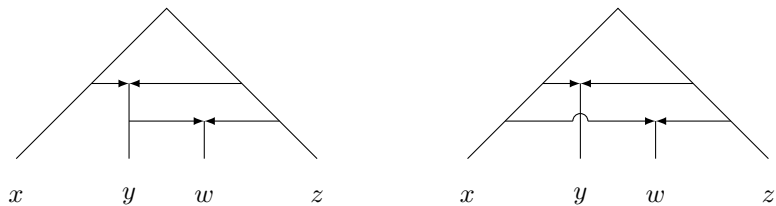}
\caption{The distinct quartets with two hybridizations, up to labelling.}\label{fig:2.retics}
\end{center}
\end{figure}

Resolving these hybridizations into the alternative unrooted quartet trees, we find that the first hybridization network only yields the unrooted quartet trees $xy|wz$ and $xz|yw$, and the second yields $xy|wz$ and $xw|yz$.  
\end{proof}

We can now prove Theorem~\ref{thm}.

\begin{proof}[Proof of Theorem~\ref{thm}]

{\em Part (a)}: Suppose $d$ is a binary tree metric. Writing $d=d_{(T,w)}$ for a binary tree $T$, select any interior vertex $v$ of $T$, and consider the three edges $e_1, e_2, e_3$
that are incident with $v$, their corresponding weights $w_1, w_2$ and $w_3$ and the rooted subtrees $T_1, T_2$ and $T_3$ that these edges are attached to, at the opposite end from $v$, as in Fig.~\ref{fig:3subtrees}$(i)$.
Let $N$ be the primitive 1-hybridization network that is obtained as follows: first, consider the rooted binary tree $T$ consisting of a root vertex attached by edges $e_1$ and $e_2$ to the roots of $T_1$
and $T_2$. Next, place reticulation arcs from (a point on each of) $e_1$ and $e_2$
to a reticulation vertex, and place a tree arc from this vertex to the root of  $T_3$.  Select any strictly positive value of $x$ with $x< \min\{w_1, w_2\}$,
and assign edge weights to $N$ as follows.  To the two edges that are incident with the root vertex assign weight $x$; to the tree arcs that are incident with $T_1$ and $T_2$ assign weights
$w_1-x$ and $w_2-x$, respectively, and to the tree arcs incident with $T_3$, assign the weight $w_3$. To the reticulation arcs assign a uniform hybridization distribution ($\alpha_1=\alpha_2=\frac{1}{2}$) (see Fig. ~\ref{fig:3subtrees}$(ii)$).  Then it can be checked that $N$, together with this arc weighting and hybridization distribution, gives a distance function that coincides exactly with $d$.

\begin{figure}[ht]
\centering
\includegraphics{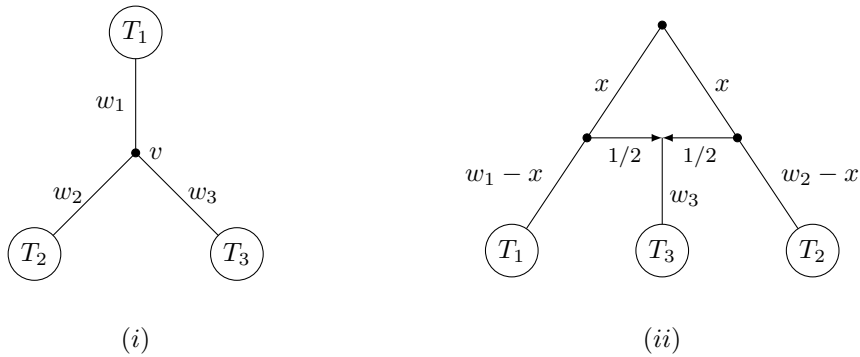}
\caption{$T_i$ represents rooted subtrees of $T$.  As in the proof of Theorem~\ref{thm} Part~(a), we have an unrooted tree and a primitive 1-reticulated network with the same average distance function.}
\label{fig:3subtrees}
\end{figure}

{\em Part (b)}:   
Suppose a network $N$ has more than one hybridization.  
We first consider the case where there is at least one reticulation arc $(u,v)$ whose source vertex ($u$) lies below a non-root vertex $w$ that has two outgoing tree arcs.  
Let  $(u',v)$ be the other reticulation arc of $N$ that ends at reticulation vertex $v$.
We will construct a quartet of leaves that give rise to a non-primitive 1-hybridization network.

Let $a, b$ and $c$ be three leaves obtained by following tree arcs from $u, v$ and $u'$, respectively (every internal vertex has at least one outgoing tree arc by the definition of a reticulation network, part (v)).  
Choose a fourth leaf $d$ that is reached by an arc from the root that does not pass through $u$ or $u'$.  This can always be done; consider two cases.  First, if $w$ lies above $u'$ as well as $u$, then paths from the root that do not go through $w$ will also not go through $u$ or $u'$. Alternatively, there is a path from the root through $w$ that goes down the ``other'' tree arc from $w$ (the one not leading to $u$) that will not pass through $u'$.  The restrictions of $N$ to $\{a,b,c,d\}$ in these two cases are shown in Figure~\ref{fig:placing.d}.
In either case, the restriction is a 1-hybridisation network that is not primitive, and so, by Corollary~\ref{cor:retic.near.root}, does not induce a tree metric on $\{a,b,c,d\}$. 
Thus $d=d_{(N, w, \beta)}$ cannot be a tree metric on $X$.
\begin{figure}[ht]
\includegraphics{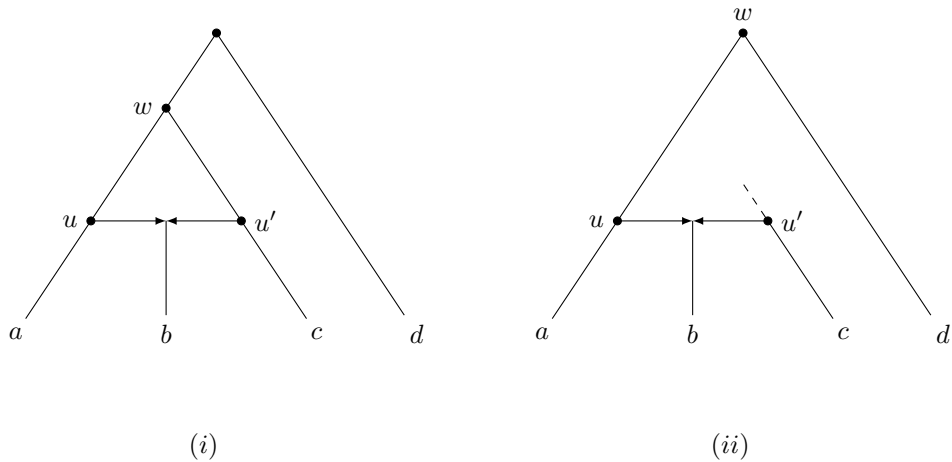} 
\caption{The two cases arising in Part (b) of the proof of Theorem~\ref{thm}, showing the restriction of $N$ to $\{a,b,c,d\}$.  $(i)$ shows the case that $w$ is above both $u$ and $u'$, and $(ii)$ shows the case $w$ is above $u$ but not $u'$.  
}\label{fig:placing.d}
\end{figure}

Thus, we may suppose that if $N$ has more than one hybridisation, then none of the source vertices of any reticulation arc lie below any non-root vertex that has two outgoing tree arcs.
That is, the source vertices of all reticulation arcs lie below either the root, or another source vertex of a reticulation arc (if a non-root vertex does not have two outgoing tree arcs, then it must have a reticulation arc).  Another way to view this is that as one proceeds along any path from the root to a leaf, once one encounters a tree vertex one never encounters another reticulation vertex.
This forces the reticulation vertices to be near the root, and for there to be a quartet in which at least two hybridizations appear (see Figure~\ref{fig:2.retics}). Therefore, since $r\le n-2$ (Inequality~\eqref{counter2}), exactly two hybridizations occur.  

Such a quartet can be chosen simply by a suitable choice of leaves.   By Lemma~\ref{lems} and Proposition~\ref{prophelps} this implies that
$d_N$ restricted to this quartet is not a tree metric on that quartet, which violates the assumption that $d_N$ is a tree metric on all of $X$.   Thus, $N$ must be a 1-hybridization network.  Lemma~\ref{lem3} and Proposition~\ref{prophelps}(b) now imply that $N$ must also be primitive.  This completes the proof.
\end{proof}

\section{Acknowledgments}

We thank the two anonymous reviewers for some helpful comments on an earlier version of this manuscript. The first author thanks the Australian Research Council (via FT100100898) and the second author thanks the NZ Marsden Fund and Allan Wilson Centre for helping to fund this work.


\end{document}